\begin{document}
\frontmatter          
\pagestyle{headings}  
\mainmatter              
\title{Edge Elimination in TSP Instances}
\titlerunning{Edge Elimination in TSP Instances}  
%
\author{Stefan Hougardy \and Rasmus~T.~Schroeder}
\authorrunning{ S.~Hougardy \and R.~T.~Schroeder} 
%
\institute{Research Institute for Discrete Mathematics,
University of Bonn\\
}

\maketitle              

\begin{abstract}
The Traveling Salesman Problem is one of the best studied NP-hard problems 
in combinatorial optimization. 
Powerful methods have been developed over the last 60 years to find 
optimum solutions to large TSP instances. The largest TSP instance so far that 
has been solved optimally has 85,900 vertices. Its solution required more than 136~years 
of total CPU time using the branch-and-cut based Concorde TSP code~\cite{ABCC2006}.

In this paper we present graph theoretic results that allow to prove that some edges of a TSP instance
cannot occur in any optimum TSP tour. Based on these results we propose a combinatorial algorithm
to identify such edges.
The runtime of the main part of our algorithm is $O(n^2 \log n)$ for
an $n$-vertex TSP instance. 
By combining our approach with the Concorde TSP solver we are able to solve a large TSPLIB instance more
than 11~times faster than Concorde alone.
\keywords{traveling salesman problem, exact algorithm}
\end{abstract}

\section{Introduction}

An instance of the Traveling Salesman Problem (TSP for short) consists of a 
complete graph on a vertex set $V$ together with a symmetric length 
function $l: V\times V\to \mathbb{R_+}$.
A \emph{tour} $T$ is a cycle that contains each vertex of the graph exactly once.
The length of a tour $T$ with edge set $E(T)$ is defined as $\sum_{e\in E(T)} l(e)$.
A tour $T$ for a TSP instance is called \emph{optimum} if no other tour for this 
instance has smaller length. Finding such an optimum TSP tour is a well known 
NP-hard problem~\cite{GarJoh1979}.

The Traveling Salesman Problem is one of the best studied problems in 
combinatorial optimization. Many exact and approximate algorithms have been 
developed over the last 60 years.
In this paper we present several theoretical results 
that allow us to eliminate edges from a
TSP instance that provably cannot be contained in any optimum TSP tour.
Based on these results we present a combinatorial algorithm that identifies such edges. 
As the runtime of our main algorithm is only $O(n^2 \log n)$ for an $n$-vertex instance, it
can be used as a preprocessing step to other TSP algorithms. 
On large instances our algorithm can speed up 
the runtime of existing exact TSP algorithms significantly.
It also can improve the performance of heuristic algorithms for the TSP. 
We present examples for both applications in Section~\ref{sec:results}.
For a good description of the state of the 
art in algorithms for the Traveling Salesman Problem see~\cite{ABCC2006}.

Our results are motivated by studying 2-dimensional
Euclidean TSP instances, i.e., instances where the vertices are
points in the Euclidean plane and the length of an edge is
the Euclidean distance between the two corresponding points.
However, most of our results hold for arbitrary symmetric TSP instances
that even do not need to be metric.

\subsubsection*{Our Contribution.}
We present several results that allow to prove that certain edges in a TSP instance cannot 
belong to any optimum TSP tour. Our Main Edge Elimination Theorem that we prove in 
Section~\ref{sec:main} turns out to be quite powerful. It allows 
to reduce the $n(n-1)/2$ edges of an $n$-vertex TSP instance to about $30 n$ edges  
or less for the TSPLIB~\cite{Rei1995} instances. 
Some additional methods for eliminating edges are presented in 
Section~\ref{sec:closepoint}. We combine these with a backtrack search which
we present in Section~\ref{sec:backtracking}.
This will allow us to reduce the number of edges in
the TSPLIB instances to about $5n$ edges.  

Unfortunately, our Main Edge Elimination Theorem does not directly lead to an efficient
algorithm for eliminating edges in a TSP instance. Thus, a second major contribution of this
paper is contained in Section~\ref{sec:validating} where we prove 
that a weaker form of our Main Edge Elimination Theorem can be checked 
in constant time per edge. 
This allows to apply our methods to very large TSP instances
containing 100,000 or more vertices. The total runtime on such a large instance is less than three days 
on a single processor.
Our algorithm can be run in parallel on all edges resulting in a runtime of less than a 
minute if sufficiently many processors are available.

Section~\ref{sec:results} contains the results of our algorithm on TSPLIB~\cite{Rei1995} 
instances as well as on a 100,000 vertex instance. Here we also show how 
our approach can speed up finding optimum solutions to large TSP instances significantly.
The TSP solver Concorde~\cite{ABCC2006} is the fastest available algorithm 
to solve large TSP instances optimally.
Concorde needs more than 199 CPU days for the TSPLIB instance d2103.
After running our edge elimination algorithm for 2~CPU days the runtime of Concorde decreases 
to slightly more than 16 CPU days. The total speed up we obtain is more than a factor of~11.

We also report two other successful applications of the edge elimination approach in Section~\ref{sec:results}.

\section{Notation and Preliminaries}
\label{sec:notation}

To avoid some degenerate cases we assume in this paper that a TSP instance 
contains at least four vertices. 
Edges that do not belong to any 
optimum TSP tour will be called \emph{useless}. We extend the definition of
a TSP instance to instances $(V,E)$ together with a length function $l:E\to\mathbb{R}_+$ 
where $E$ contains all edges of the
complete graph on $V$ except some useless edges.  This implies that 
\emph{all} optimum TSP tours on the complete graph on $V$ are contained in the graph 
$(V,E)$. Therefore, to find an optimum TSP tour it suffices to consider edges from $E$.
However, to prove that some edge in $E$ is useless it might be useful to also look at edges
that are not in $E$. 
Our edge elimination algorithm will start with some TSP instance $(V,E)$ and 
return an instance $(V,E')$ such that $E'$ is a subset of $E$ and contains all
optimum TSP tours.

As most of our results are inspired by studying Euclidean instances we will 
often call a vertex in a TSP instance a \emph{point}.
If $x$ and $y$ are two vertices in a TSP instance we will denote the edge 
$\{x,y\}$ by $xy$ simply.

A TSP instance is called \emph{metric} if $l(xy) \le l(xz) +  l(zy)$ for all
vertices $x,y,z\in V$. 
Our results do not require that the TSP instance is metric. However, we can 
improve some of our results if we assume some metric properties of the instance.
Euclidean TSP instances are of course metric, but  
because of problems with floating point accuracy such instances are usually not 
considered in practice. 
The well known 
TSPLIB~\cite{Rei1995} instances for example use the discretized Euclidean 
distances EUC\_2D and CEIL\_2D. In the first, the Euclidean distance is rounded 
to the nearest integer while in the second it is rounded up
to the next integer.
Note that the EUC\_2D distance function is not metric and that an optimum TSP 
tour for such an instance may contain two crossing edges.

Currently, the most successful heuristic TSP algorithms~\cite{Hel2009}
are based on the concept of \emph{$k$-opt moves}.
Given a TSP tour a $k$-opt move makes local changes to the tour by replacing $k$ 
edges of the tour by $k$ other edges. For a $k$-opt move we require, that after 
the replacement of the $k$ edges the new subgraph is 2-regular.
If the new subgraph is again a tour we call the 
$k$-opt move \emph{valid}. If a tour $T$ allows a valid $k$-opt move resulting 
in a shorter tour, then $T$ cannot be an optimum tour. This simple observation 
is the core of our algorithm for proving the existence of useless edges. \medskip

Let $pq$ and $xy$ be two edges in a TSP instance. We call $pq$ and $xy$ 
\emph{compatible}, denoted by $pq\sim xy$, if
\begin{equation} \label{eqn:compatible}
\max \left(l(px)+l(qy), l(py)+l(qx)\right) ~\ge~ l(pq) + l(xy)~.
\end{equation}
Otherwise $pq$ and $xy$ are called \emph{incompatible}. Note that two edges that 
have at least one vertex in common are always compatible.

\begin{lemma} \label{lemma:compatibleedges}
Any two edges in an optimum TSP tour are compatible.
\end{lemma}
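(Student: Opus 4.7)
The plan is to apply the $k$-opt framework introduced just before the lemma with $k=2$. If the two edges share a vertex, compatibility holds trivially by the remark following~(\ref{eqn:compatible}), so I only need to treat the case where $pq$ and $xy$ are vertex-disjoint edges of an optimum tour $T$.

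First I would observe what happens when $pq$ and $xy$ are deleted from $T$: since $T$ is a Hamiltonian cycle, removing two edges produces exactly two vertex-disjoint paths whose endpoints are $\{p,q\}$ and $\{x,y\}$. To restore 2-regularity one must add two edges between these endpoint sets, and there are exactly two such 2-opt replacements, namely $\{px,qy\}$ and $\{py,qx\}$. A quick case analysis on how the two paths are oriented shows that precisely one of these replacements reconnects the two paths into a single Hamiltonian cycle, while the other produces two shorter disjoint cycles; hence exactly one of the two 2-opt moves is \emph{valid} in the sense of the preceding discussion.

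Next I would invoke the optimality of $T$. Let the valid 2-opt move add the pair with total length $L \in \{l(px)+l(qy),\, l(py)+l(qx)\}$. Since this move produces a tour and $T$ is optimum, it cannot strictly shorten $T$, so
\[
L \;\ge\; l(pq)+l(xy).
\]
Taking the maximum over the two possible pairings can only increase the left-hand side, so
\[
\max\bigl(l(px)+l(qy),\; l(py)+l(qx)\bigr) \;\ge\; L \;\ge\; l(pq)+l(xy),
\]
which is precisely the compatibility condition~(\ref{eqn:compatible}).

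The only step requiring some care is the combinatorial claim that exactly one of the two reconnection patterns yields a single cycle; once that is pinned down, the rest is a one-line appeal to optimality together with the trivial inequality $\max \ge L$. Note that we do not need the strictly better pairing to be the valid one, which is exactly why the compatibility definition uses a maximum rather than a specific choice of pairing.
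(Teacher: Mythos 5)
Your proof is correct and takes essentially the same approach as the paper: both rest on the observation that of the two possible 2-opt reconnections of the edges $pq$ and $xy$, one must yield a valid tour, and then appeal to the optimality of $T$. The only difference is presentational -- you argue the contrapositive directly (the valid move cannot improve an optimum tour, hence the maximum in~(\ref{eqn:compatible}) is at least $l(pq)+l(xy)$), whereas the paper assumes incompatibility and derives a contradiction.
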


\begin{proof}
Assume $pq$ and $xy$ are two incompatible edges in an optimum TSP tour $T$.
By (\ref{eqn:compatible}) we have 
$l(px)+l(qy) ~<~ l(pq) + l(xy)$ and $l(py)+l(qx) ~<~ l(pq) + l(xy)$.
Thus $T$ can be improved by a $2$-opt move, that replaces edges $pq$ and $xy$ by
either $px$ and $qy$ or by $py$ and $qx$. One of these two $2$-opt moves must be 
valid. This contradicts the assumption that $T$ is an optimum TSP tour.
\qed
\end{proof}

For $k>2$ we call a set of $k$
edges \emph{$k$-incompatible}, if they cannot belong to the same optimum
TSP tour.

\section{The Main Edge Elimination Theorem}
\label{sec:main}

To be able to formulate our Main Edge 
Elimination Theorem, we need to introduce the concept of \emph{potential 
points} first.
\label{sec:potentialpoints}
Let $(V,E)$ be a TSP instance and $pq\in E$. For $r\in V\setminus \{p,q\}$ 
define 
\begin{equation}
R:= \{x\in V~|~rx \in E\land pq\sim rx\}~.
\end{equation}
Let $R_1,\ R_2\subset V$ and $R\subset R_1\cup R_2$.
We call $r$ \emph{potential} with respect to $pq$ and $R_1$ and $R_2$, if for 
every optimum tour containing $pq$, the two neighbors of $r$ cannot both lie in 
$R_1$ respectively $R_2$. We say that such a covering \emph{certifies} the 
potentiality of $r$. Note that $R_1\cap R_2$ needs not to be empty. However if 
$r$ is potential, in any optimum tour containing $pq$, $r$ cannot be connected 
with $R_1\cap R_2$. In Section~\ref{sec:certification} we will develop efficient 
methods for certifying potential points.

\begin{theorem}[Main Edge Elimination Theorem]\label{thm:main}%
Let $(V,E)$ be a TSP instance and $pq\in E$. Let $r$ and $s$ be two different 
potential points with respect to $pq$ with covering $R_1$ and $R_2$ 
respectively $S_1$ and $S_2$. Let $r\not\in S_1\cup S_2$ and $s\not\in R_1\cup
R_2$. If
\begin{equation}
\label{eqn:main1}
l(pq) - l(rs) + \min_{z\in S_1} \left\{ l(sz) - l(pz)\right\} + \min_{y\in R_2}
\left\{ l(ry) - l(qy)\right\} ~ > ~ 0
\end{equation}
and
\begin{equation}
\label{eqn:main2}
l(pq) - l(rs) + \min_{x\in R_1} \left\{ l(rx) - l(px)\right\} + \min_{w\in S_2}
\left\{ l(sw) - l(qw)\right\} ~ > ~ 0~,
\end{equation}
then the edge $pq$ is useless.
\end{theorem}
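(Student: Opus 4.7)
The plan is a proof by contradiction: I assume that $pq$ belongs to some optimum tour $T$ and construct a 3-opt move that produces a strictly shorter tour, violating the optimality of $T$.

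First, I determine the tour-neighbors of $r$ and $s$. By Lemma~\ref{lemma:compatibleedges} every two edges of $T$ are compatible, so each neighbor $v$ of $r$ in $T$ satisfies $rv\in E$ and $pq\sim rv$, i.e.\ $v\in R$. Since $r$ is potential with covering $R_1,R_2$ and $R\subset R_1\cup R_2$, one neighbor of $r$ must lie in $R_1\setminus R_2$ (call it $x$) and the other in $R_2\setminus R_1$ (call it $y$). An identical argument yields tour-neighbors $z\in S_1\setminus S_2$ and $w\in S_2\setminus S_1$ of $s$. The hypotheses $r\notin S_1\cup S_2$ and $s\notin R_1\cup R_2$ force $r\notin\{z,w\}$ and $s\notin\{x,y\}$, so all vertices involved in the forthcoming swaps are distinct in the relevant way.

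Next, I consider two candidate 3-opt moves:
\begin{align*}
\text{Swap I:}\quad &\text{delete }\{pq,\,ry,\,sz\},\ \text{insert }\{rs,\,pz,\,qy\};\\
\text{Swap II:}\quad &\text{delete }\{pq,\,rx,\,sw\},\ \text{insert }\{rs,\,px,\,qw\}.
\end{align*}
Deleting three edges of $T$ splits it into three vertex-disjoint sub-paths; the three inserted edges reassemble them into a single Hamilton cycle iff each inserted edge joins two different sub-paths, so that no sub-path closes prematurely into a stray cycle. Using the substitution $(r,R_1,R_2,x,y)\leftrightarrow(s,S_1,S_2,z,w)$, which swaps hypotheses (\ref{eqn:main1}) and (\ref{eqn:main2}), I assume without loss of generality that $r$ precedes $s$ on the Hamilton path $T\setminus\{pq\}$ oriented from $p$ to $q$. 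Four sub-cases arise depending on which of $\{x,y\}$ is the predecessor of $r$ and which of $\{z,w\}$ is the predecessor of $s$; a direct inspection shows that Swap~I yields a single Hamilton cycle in three of them, while in the remaining sub-case --- where $y$ precedes $r$ and $w$ precedes $s$, so that the inserted edge $rs$ would be a chord of the middle sub-path $[r,x,\ldots,w,s]$ --- Swap~II is valid instead.

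Finally, the length change of Swap~I equals $l(rs)+l(pz)+l(qy)-l(pq)-l(ry)-l(sz)$. The specific $y\in R_2$ and $z\in S_1$ appearing in $T$ satisfy $l(ry)-l(qy)\ge\min_{y'\in R_2}\{l(ry')-l(qy')\}$ and $l(sz)-l(pz)\ge\min_{z'\in S_1}\{l(sz')-l(pz')\}$, so hypothesis (\ref{eqn:main1}) makes this length change strictly negative. An analogous estimate using (\ref{eqn:main2}) handles Swap~II. In either case the new tour is strictly shorter than $T$, contradicting optimality. The main obstacle is the four-case bookkeeping in the previous paragraph, verifying that the chosen swap reconnects the three sub-paths into a single cycle rather than two; the improvement inequality itself is a routine consequence of the $\min$-formulation. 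Degenerate sub-configurations, where some of $x,y,z,w$ coincide with each other or with $p$ or $q$, merely shorten some sub-paths and do not obstruct the argument.
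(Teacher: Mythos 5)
Your proof is correct and takes essentially the same route as the paper's: the identical pair of 3-opt moves (delete $pq,ry,sz$ and insert $rs,pz,qy$; delete $pq,rx,sw$ and insert $rs,px,qw$), with Lemma~\ref{lemma:compatibleedges} placing the tour-neighbors of $r$ and $s$ in the covering sets and the min-terms in (\ref{eqn:main1}) and (\ref{eqn:main2}) yielding the strict improvement. The only difference is that you carry out the four-subcase validity check (with the WLOG symmetry swapping the two hypotheses) that the paper leaves as ``easy to verify,'' and that analysis, including the identification of the exceptional subcase where $rs$ would close the middle sub-path, is accurate.
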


\begin{proof} 
Assume that the edge $pq$ is contained in an optimum TSP tour $T$. 
Let $rx, ry, sz, sw\in T$ be the incident edges of $r$ and $s$. 
We may assume that the vertices $x,y,z$, and $w$ are labeled in such a
way that $x\in R_1$, $y\in R_2$, $z\in S_1$, and $w\in S_2$.
As $r$ and $s$ are potential, we have $r,s\not\in \{p,q\}$.
By assumption we have $rs\not\in T$, making the four edges $rx, ry, sz$ and $sw$ 
distinct.

\begin{figure}[ht]
\psfrag{p}[b][]{\raisebox{2mm}{$p$}}
\psfrag{q}[b][]{\raisebox{2mm}{$q$}}
\psfrag{r}[rt][]{\raisebox{-2mm}{$r$~}}
\psfrag{s}[rb][]{\raisebox{-2mm}{$s$~}}
\psfrag{a}[r][]{$x$~}
\psfrag{b}[l][]{~$y$}
\psfrag{c}[r][]{$z$~}
\psfrag{d}[l][]{~$w$}
\begin{tabular}{l@{\hspace*{13.8mm}}l@{\hspace*{13.8mm}}l}
\includegraphics[width=3cm]{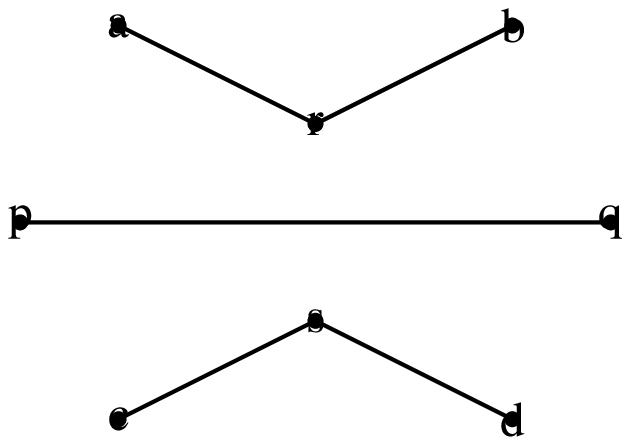} &
\includegraphics[width=3cm]{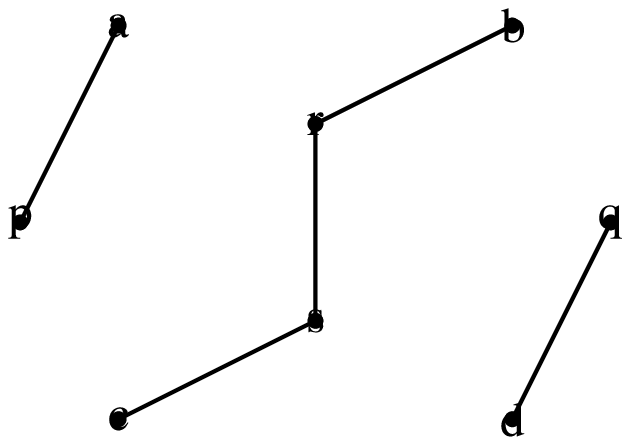} &
\includegraphics[width=3cm]{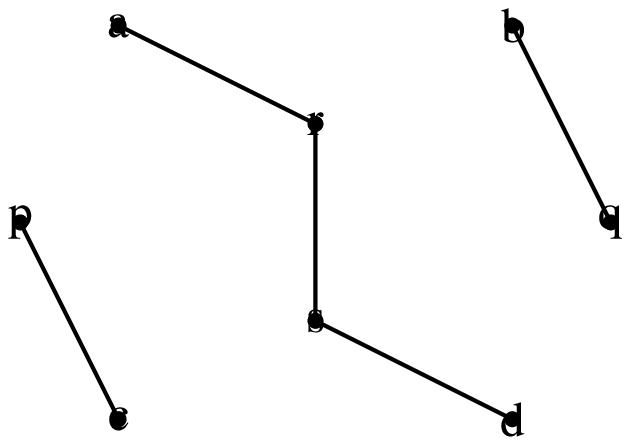} \\[2mm]
The situation of \rlap{Theorem~\ref{thm:main}.} &
One possible 3-opt. &
Another possible 3-opt. \\
\end{tabular}
\caption{Two possible 3-opt moves that imply that the edge $pq$ is useless.}
\label{fig:MainEdgeElimination}
\end{figure}

Now there exist two possible $3$-opt moves as shown in 
Figure~\ref{fig:MainEdgeElimination}. The first is to replace $pq$, $rx$, and 
$sw$ with $px$, $rs$, and $qw$. The second is to replace $pq$, $ry$, and $sz$ 
with $pz$, $rs$, and $qy$. It is easy to verify that for every tour containing 
the egdes $pq, rx, ry, sz$ and $sw$, one of these two 3-opt moves must be valid.

The two $3$-opt moves are decreasing the length of the tour $T$ by
\begin{align}
&l(pq) - l(rs) + l(rx) - l(px) + l(sw) - l(qw)~\text{respectively}\\
&l(pq) - l(rs) + l(ry) - l(py) + l(sz) - l(qz)~.
\end{align}
By inequalities~(\ref{eqn:main1}) and (\ref{eqn:main2}), both terms are strictly
positive. Since one of these $3$-opt moves is valid, this yields a tour shorter
than $T$, contradicting the optimality of $T$.
\qed
\end{proof}

\section{The Close Point Elimination Theorems}
\label{sec:closepoint}

The Main Edge Elimination Theorem will be our primary tool to prove that an edge in a TSP 
instance is useless. As soon as many edges of a TSP instance 
are known to be useless other methods can be applied. In this section
we present our so called \emph{Close Point Elimination}. When applied to the complete graph
of a TSP instance it will eliminate almost no edge. However,
in combination with the Main Edge Elimination Theorem it will allow to identify additional 
useless edges.

\begin{theorem}[Close Point Elimination Theorem]\label{thm:ClosePoint}%
Let $(V,E)$ be a TSP instance and $pq\in E$. Let $r\in V\setminus\{p,q\}$ and 
define $R:= \left\{ x\in V~|~rx\in E \land pq \sim rx\right\}$.
If for all $x,y\in R$ with $\{x,y\} \not =  \{p,q\}$ we have
\begin{equation}
\label{eqn:closepointelimination}
l(xy) + l(pr) + l(qr) ~<~ l(pq) + l(xr) + l(yr)
\end{equation}
then the edge $pq$ is useless.
\end{theorem}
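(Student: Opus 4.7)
The plan is to argue by contradiction: suppose $pq$ is contained in some optimum tour $T$, and let $x$ and $y$ be the two neighbors of $r$ in $T$. First I would verify that $x,y\in R$: both $rx$ and $ry$ lie in $T\subseteq E$, and by Lemma~\ref{lemma:compatibleedges} any two edges in an optimum tour are compatible, so in particular $rx\sim pq$ and $ry\sim pq$. Next I would rule out $\{x,y\}=\{p,q\}$, since otherwise $p,q,r$ would be pairwise joined by edges of $T$, giving a $3$-cycle inside the Hamiltonian cycle $T$, which is impossible for $|V|\ge 4$. Hence the hypothesis of the theorem applies to this particular choice of $x$ and $y$.

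The key step is to exhibit a single improving local move on $T$. The natural candidate is the $3$-opt that removes the edges $pq$, $rx$, $ry$ and inserts $xy$, $pr$, $qr$; intuitively, this is an Or-opt move that deletes $r$ from its current location (short-circuiting $x$ and $y$ by the edge $xy$) and reinserts it in the middle of the edge $pq$. I would verify that this operation always produces a Hamiltonian cycle, regardless of the cyclic order of the vertices along $T$ and regardless of whether $x$ or $y$ happens to coincide with $p$ or $q$. Writing $T$ in the generic case as $r,x,\alpha,p,q,\beta,y,r$ (or the analogous version with $p$ and $q$ swapped), one checks directly that the new edges reconnect the three resulting paths into one cycle; the degenerate cases $x\in\{p,q\}$ or $y\in\{p,q\}$ collapse the $3$-opt to a $2$-opt whose validity is even more immediate. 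This case analysis is the main technical obstacle, but it is routine because an Or-opt relocation of a single vertex is always valid.

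Once validity is established, the conclusion is a direct comparison of tour lengths: the move changes the length of $T$ by
\[
\bigl(l(xy)+l(pr)+l(qr)\bigr)-\bigl(l(pq)+l(xr)+l(yr)\bigr),
\]
which by inequality~(\ref{eqn:closepointelimination}), applied to the pair $x,y\in R$ with $\{x,y\}\ne\{p,q\}$, is strictly negative. This yields a tour strictly shorter than $T$, contradicting the optimality of $T$, so $pq$ lies in no optimum tour and is useless.
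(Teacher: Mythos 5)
Your proof is correct and follows essentially the same route as the paper: assume $pq$ lies in an optimum tour $T$, take the two neighbors $x,y$ of $r$ in $T$, and apply the $3$-opt move replacing $pq$, $rx$, $ry$ by $xy$, $pr$, $qr$, which inequality~(\ref{eqn:closepointelimination}) makes strictly improving. The details you spell out (membership $x,y\in R$ via Lemma~\ref{lemma:compatibleedges}, excluding $\{x,y\}=\{p,q\}$ via the triangle argument with $|V|\ge 4$, and the degenerate cases $x\in\{p,q\}$ collapsing to a valid $2$-opt) are exactly what the paper's terse proof leaves implicit.
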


\begin{proof}
Assume that an optimum tour $T$ contains the edge $pq$. Let $xr$ and $ry$ be the 
two edges in $T$ that are incident with $r$. Then $\{x,y\} \not =  \{p,q\}$ 
and $x$ and $y$ must be in $R$. By assumption inequality~(\ref{eqn:closepointelimination})
holds. Then we can replace the edges 
$pq$, $xr$, and $yr$ with $xy$, $pr$, and $qr$ and obtain a tour that is shorter 
than $T$. Note that this also holds if one of $x$ and $y$ equals one 
of $p$ and $q$
. This contradicts the optimality of the tour $T$.
\qed
\end{proof}

For the degenerate case with $x=p$ we show a stronger result by
making use of the notion of \emph{metric excess}.
It will allow us to short cut a eulerian subgraph in an instance that needs not to be metric.
The \emph{metric excess} $m_{pq} (z)$ of a vertex $z$ 
with respect to an edge $pq$ is defined as
\begin{eqnarray*}
\min_{x,y\in N(z)\setminus\{p,q\}} 
& ~~\max~ \{ & l(xz) + l(zp) - l(xp), l(yz) + l(zp) - l(yp), \\
&         & l(xz) + l(zq)- l(xq), l(yz) + l(zq) - l(yq) ~\}~.
\end{eqnarray*}

\begin{theorem}[Strong Close Point Elimination Theorem]
\label{lemma:3incompatible}
Let $pq$, $pr$ and $rx$ be three edges of a TSP instance $(V,E)$. Let 
$z\in V\setminus\{p,q,r,x\}$. If
\begin{equation}
l(xq) + l(rz) + l(zp) - m_{pr}(z) ~<~ l(pq) + l(rx)~,
\end{equation}
then the edges $pq$, $pr$ and $rx$ are $3$-incompatible.
\end{theorem}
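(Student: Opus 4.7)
My plan is to argue by contradiction: assume an optimum tour $T$ contains the three edges $pq$, $pr$, $rx$, so that $q-p-r-x$ is a subpath of $T$, and construct a strictly shorter tour. The first step is to identify the two tour neighbors $a, b$ of $z$ in $T$ and to argue that they form an admissible pair in the minimum defining $m_{pr}(z)$. Because the tour edges at $p$ are $pq$ and $pr$, the tour neighbors of $p$ are exactly $q$ and $r$; since $z\notin\{p,q,r,x\}$, the vertex $p$ cannot be a tour neighbor of $z$, and the same observation at $r$ gives the analogous conclusion. Hence $a,b\in N(z)\setminus\{p,r\}$, and the definition of $m_{pr}(z)$ gives
\[
m_{pr}(z)\;\le\;\max\bigl\{\,l(az)+l(zp)-l(ap),\ l(bz)+l(zp)-l(bp),\ l(az)+l(zr)-l(ar),\ l(bz)+l(zr)-l(br)\,\bigr\},
\]
so at least one of the four terms on the right is $\ge m_{pr}(z)$.

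For each of the four possibilities I would then write down an explicit $3$-opt move. In the representative case $l(az)+l(zp)-l(ap)\ge m_{pr}(z)$, I would replace the three tour edges $pq$, $rx$, $az$ by the three new edges $xq$, $rz$, $ap$. A direct check on the cycle structure shows the result is a Hamilton cycle: removing $pq$ and $rx$ from $T$ leaves the edge $pr$ together with a Hamilton path from $x$ to $q$ through $z$; further removing $az$ splits that path in two; and the three new edges $xq$, $rz$, $ap$ splice all pieces into a single cycle of the form $a-\cdots-b-z-r-p-a$. The length change is $l(xq)+l(rz)+l(ap)-l(pq)-l(rx)-l(az)$, and substituting $l(ap)\le l(az)+l(zp)-m_{pr}(z)$ bounds it above by $l(xq)+l(rz)+l(zp)-m_{pr}(z)-l(pq)-l(rx)<0$, the last inequality being exactly the hypothesis of the theorem. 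The three remaining cases are handled by completely analogous $3$-opt moves obtained by swapping $a\leftrightarrow b$ and/or by inserting the subpath $p-r$ in the reversed orientation (which swaps the roles of $zp$ and $zr$).

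The one point that I expect to require the most care is verifying that each such $3$-opt is a valid tour modification even in degenerate configurations, since one of $a, b$ may coincide with $q$ or $x$. For instance if $a=q$, then the added edge $ap$ coincides with the removed edge $pq$ and the move collapses to the $2$-opt replacing $\{rx,qz\}$ by $\{xq,rz\}$. In each such coincidence the resulting edge set is still easily seen to be a Hamilton cycle, and the numerical length bound continues to hold after the obvious cancellations, so no new ideas are needed for these checks---only bookkeeping.
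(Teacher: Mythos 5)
Your proof is correct and follows essentially the same route as the paper: the paper deletes $pq$ and $rx$, inserts $qx$, $pz$, $rz$ to obtain an Eulerian graph with degree four at $z$, and then shortcuts at $z$ using the metric excess, which is exactly your combined move deleting $\{pq, rx, az\}$ and inserting $\{xq, rz, ap\}$ (or one of the three symmetric variants selected by which term attains $m_{pr}(z)$). Your write-up merely makes explicit what the paper's terse phrase ``a short cut is possible'' leaves implicit---the identification of the tour neighbors $a,b \in N(z)\setminus\{p,r\}$, the four-case analysis, the Hamiltonicity check, and the degenerate coincidences---so it is a faithful, more detailed rendering of the same argument.
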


\begin{proof}
Assume that an optimum tour $T$ contains the edges $pq$, $pr$ and $rx$. 
We show that there exists a 3-opt move yielding a tour shorter than $T$.
Delete the edges $pq$ and $rx$ and insert the edges $qx$, $pz$ and $rz$.
Note that this edge set is eulerian but not a TSP tour as vertex $z$ has degree 
four. But as $ l(xq) + l(rz) + l(zp) - m_{pr}(z) ~<~ l(pq) + l(rx) $
a short cut is possible that yields a tour shorter than $T$.
This contradicts the optimality of the tour $T$.
\qed
\end{proof}

\section{Certifying potential points}
\label{sec:certification}

Starting with a complete graph, it can be very inefficient, to exclude
edges using the criteria discussed so far. We will show how to adopt the 
Main Edge Elimination Theorem such that most edges can be excluded efficiently.

The aim of this section is to show that one can prove in constant time that a 
point $r$ is potential with respect to an edge $pq$. 
To be able to do so, we will assume in the following 
that the distance function $l$ is the EUC\_2D function. Similar results hold for 
any other distance function that is close to the Euclidean distance.
By $|pq|$ we denote the Euclidean distance of $p$ and $q$. Note that
\begin{equation}
l(pq) - \frac{1}{2} \leq |pq| \leq l(pq) + \frac{1}{2}~. \label{eq:euc2d}
\end{equation}

\label{sec:partition}

We now want to find a covering for a point $r$ and an edge $pq$. We will 
therefore take a closer look at the set 
$R:= \{x\in V~|~rx \in E \land pq\sim rx\}$
which was already defined in Section~\ref{sec:potentialpoints}.
Observe that in a metric space the compatibility of the edges $pq$ and $rs$ 
implies that the edges $pq$ and $rt$ are also compatible for all $t\in rs$. 
Since we use EUC\_2D lengths, this only holds after adding some constants. 
For each point $r$ choose $\delta_r$ s.t. no vertex apart from $r$ lies in
the interior of the circle around $r$ with radius $\delta_r$. One can for
example use
\begin{equation}
\delta_r := \frac{1}{2} + \max\{d\in \mathbb{Z}_+ 
~|~ \forall s\in V\setminus \{r\} ~ l(rs) > d\}~.
\end{equation}
For an edge $pq$ and a point $r\in V\setminus\{p,q\}$ define the two lengths
\begin{eqnarray}
l_p := \delta_r + l(pq) - l(qr) - 1~~~\text{and}~~~\label{def:lp} 
l_q := \delta_r + l(pq) - l(pr) - 1~.\label{def:lq}
\end{eqnarray}
For each point $s\in V\setminus \{r\}$ define $s_r \in rs$ such that $|rs_r| = \delta_r$.

\begin{lemma}
\label{thm:conelemma}
Let $(V,E)$ be a TSP instance, $pq\in E$, $r\in V\setminus \{p,q\}$ and 
$s\in V\setminus \{r\}$. If 
$|ps_r| < l_p$ and $|qs_r| < l_q$
then the edges $pq$ and $rs$ are incompatible.
\end{lemma}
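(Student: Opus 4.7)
The plan is to unfold the definition of incompatibility and verify the two inequalities it requires. By the definition in~(\ref{eqn:compatible}), $pq$ and $rs$ are incompatible exactly when
\begin{equation*}
l(ps)+l(qr) ~<~ l(pq)+l(rs) \quad \text{and} \quad l(pr)+l(qs) ~<~ l(pq)+l(rs).
\end{equation*}
So I would aim to establish each of these inequalities separately: the first from the hypothesis $|ps_r|<l_p$, and the second symmetrically from $|qs_r|<l_q$.

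The key geometric observation is that $s_r$ lies on the segment $\overline{rs}$ at Euclidean distance exactly $\delta_r$ from $r$, so $|rs|=\delta_r+|s_rs|$, and by the triangle inequality in the plane $|ps|\le|ps_r|+|s_rs|$. Combined with~(\ref{eq:euc2d}) to pass between Euclidean distances and their EUC\_2D roundings, I get
\begin{equation*}
l(ps) ~\le~ |ps|+\tfrac{1}{2} ~\le~ |ps_r|+|s_rs|+\tfrac{1}{2} ~<~ l_p+|s_rs|+\tfrac{1}{2}.
\end{equation*}
Substituting the definition of $l_p$ from~(\ref{def:lp}) and regrouping gives
\begin{equation*}
l(ps) ~<~ l(pq)-l(qr)+\bigl(\delta_r+|s_rs|-\tfrac{1}{2}\bigr) ~\le~ l(pq)-l(qr)+l(rs),
\end{equation*}
where the last step uses $l(rs)\ge |rs|-\tfrac{1}{2}=\delta_r+|s_rs|-\tfrac{1}{2}$, again from~(\ref{eq:euc2d}). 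Rearranging yields the first of the two required inequalities.

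The second inequality follows by the same argument with the roles of $p$ and $q$ swapped, using $|qs_r|<l_q$ and the definition of $l_q$. Both inequalities together are precisely the negation of~(\ref{eqn:compatible}), so $pq$ and $rs$ are incompatible. The main subtlety, and the only real obstacle, is bookkeeping of the $\pm\tfrac{1}{2}$ rounding terms that arise from EUC\_2D: it is exactly to absorb the worst-case rounding on both $l(ps)$ and $l(rs)$ that the constants $l_p$ and $l_q$ in~(\ref{def:lp}) carry the $-1$ correction, and verifying that this $-1$ is tight is the one step that requires care.
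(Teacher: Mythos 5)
Your proof is correct and follows essentially the same route as the paper's: both arguments bound $l(ps)+l(qr)$ (and symmetrically $l(pr)+l(qs)$) by combining the triangle inequality through $s_r$, the EUC\_2D rounding bounds~(\ref{eq:euc2d}), the hypothesis $|ps_r|<l_p$, and the definition~(\ref{def:lp}) of $l_p$, with the $-1$ in $l_p$ absorbing the two half-unit rounding errors exactly as you note. The only cosmetic difference is that you isolate $l(ps)$ before rearranging, whereas the paper carries $l(qr)$ along both sides of the chain of inequalities.
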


\begin{proof}
We show that both $2$-opt moves involving the edges $pq$ and $rs$ have shorter
length. Using~(\ref{eq:euc2d}), (\ref{def:lp}), and the triangle inequality we get:
\begin{eqnarray*}
l(ps) + l(qr) &\leq& |ps_r| + |ss_r| + \frac{1}{2} + l(qr) 
{~<~} l_p + |ss_r| + \frac{1}{2} + l(qr)\\
&=& \delta_r + l(pq) - l(qr) - \frac{1}{2} + |ss_r| + l(qr)
~\leq~ l(pq) + l(rs)
\end{eqnarray*}
\noindent
$l(qs) + l(pr)< l(pq) + l(rs)$ is proven analogously. Hence $pq$ and $rs$ are
incompatible.
\qed
\end{proof}

\begin{figure}[ht]
\begin{center}
\psfrag{p}[b][]{\raisebox{2mm}{$p$}}
\psfrag{q}[b][]{\raisebox{2mm}{$q$}}
\psfrag{Rp}[t][]{\raisebox{0.5mm}{~$R_p$}}
\psfrag{Rq}[t][]{\raisebox{2.5mm}{~~$R_q$}}
\psfrag{x}[rt][]{\raisebox{-2mm}{$r$~}}
\includegraphics[height=5cm]{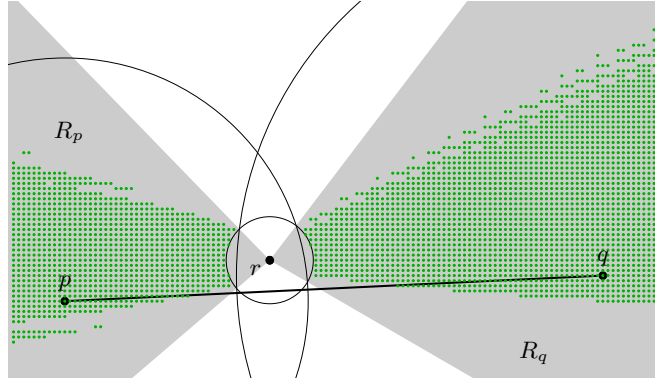}
\caption{The small dots indicate all possible endpoints for the edge $rs$ such 
that $rs \sim pq$. By Lemma~\ref{thm:conelemma}
all these vertices must be contained in the gray cones $R_p$ and $R_q$.}
\label{fig:RpRqCones}
\end{center}
\end{figure}

Figure~\ref{fig:RpRqCones} illustrates Lemma~\ref{thm:conelemma}.
It shows two cones $R_p$ and $R_q$ for which $R\subset R_p\cup R_q$. 
The cones are defined as
$R_p := \{t~|~|qt_r|\geq l_q\}$ and
$R_q := \{t~|~|pt_r|\geq l_p\}$.
We do not need that the cones $R_p$ and $R_q$ are disjoint.
However we need that they actually exist, i.e. the circles around $p$ and $q$ 
have to intersect the circle around $r$.

\begin{lemma}
\label{thm:circleintersection}
Let $(V,E)$ be a TSP instance, $pq\in E$ and $r\in V\setminus \{p,q\}$. If
\begin{equation}
\label{eqn:basic}
l_p + l_q \geq l(pq) - \frac{1}{2}~,
\end{equation}
then the circle with center $r$ and radius $\delta_r$ intersects both circles 
with centers $p$ and $q$ and radii $l_p$ respectively $l_q$.
\end{lemma}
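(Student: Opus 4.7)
The plan is to reduce the circle-circle intersection conclusion to two size comparisons for each of the two pairs, and then to match them against the hypothesis. Concretely, the circle around $r$ of radius $\delta_r$ meets the circle around $p$ of radius $l_p$ iff
\[
|l_p - \delta_r| \;\le\; |pr| \;\le\; l_p + \delta_r,
\]
and the analogous double inequality must hold with $q$ in place of $p$. So the proof splits into an \emph{upper-bound} part (distance not too large) and a \emph{lower-bound} part (neither circle strictly contained in the other).

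For the upper bound I would first rewrite the hypothesis. Plugging~(\ref{def:lp}) and~(\ref{def:lq}) into $l_p + l_q \ge l(pq) - \tfrac{1}{2}$ and simplifying turns it into $2\delta_r + l(pq) \ge l(pr) + l(qr) + \tfrac{3}{2}$. Now $l_p + \delta_r = 2\delta_r + l(pq) - l(qr) - 1$, while~(\ref{eq:euc2d}) gives $|pr| \le l(pr) + \tfrac{1}{2}$; combining these two facts with the rewritten hypothesis yields $|pr| \le l_p + \delta_r$, which is one of the required bounds. Interchanging the roles of $p$ and $q$ in the same calculation gives $|qr| \le l_q + \delta_r$.

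For the lower bound I would split on whether $l_p \ge \delta_r$ or not. If $l_p \ge \delta_r$, then $l_p - \delta_r = l(pq) - l(qr) - 1$, and the Euclidean triangle inequality $|pr| \ge |pq| - |qr|$ together with the two sides of~(\ref{eq:euc2d}) gives exactly $|pr| \ge l(pq) - l(qr) - 1$. If instead $l_p < \delta_r$, I would invoke the defining property of $\delta_r$: no vertex other than $r$ sits in the open $\delta_r$-disc around $r$, so $|pr| \ge \delta_r$, and since $l_p \ge 0$ this in turn gives $|pr| \ge \delta_r - l_p$. The argument for the $q$-circle is symmetric.

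The main obstacle is that the hypothesis $l_p + l_q \ge l(pq) - \tfrac{1}{2}$ really only encodes the upper-bound inequality cleanly, so the bookkeeping for the lower bound must be handled separately using the EUC\_2D rounding tolerance and the choice of $\delta_r$. The delicate point is tracking the additive $\tfrac{1}{2}$'s from~(\ref{eq:euc2d}) so that the constants match exactly; this is where the ``$-1$'' terms inside the definitions of $l_p$ and $l_q$ pay off.
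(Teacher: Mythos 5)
Your proof is correct and follows essentially the same route as the paper: both reduce the claim to radius--distance inequalities between the circles and verify them using the EUC\_2D bounds~(\ref{eq:euc2d}), the Euclidean triangle inequality, and hypothesis~(\ref{eqn:basic}); your upper- and lower-bound computations are algebraic rearrangements of the paper's single chain of inequalities. If anything, you are slightly more careful than the paper, whose reduction ``it suffices to show $|ir| - \delta_r \leq l_i \leq |ir| + \delta_r$'' silently omits the containment condition $\delta_r - l_p \leq |pr|$, which you dispose of explicitly via the defining property of $\delta_r$ (namely $|pr| \geq \delta_r$).
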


\begin{proof}
It suffices to show 
$|ir| - \delta_r \leq l_i \leq |ir| + \delta_r$ for $i\in \{p,q\}$.
\begin{eqnarray*}
|pr| - \delta_r &\leq& l(pr) + \frac{1}{2} - \delta_r
\overset{(\ref{def:lq})}{=} l(pq) - \frac{1}{2} - l_q
\overset{(\ref{eqn:basic})}{\leq} l_p \quad \\
&\overset{(\ref{def:lp})}{\leq}& 
(|pr| + |qr| + \frac{1}{2}) - (|qr| - \frac{1}{2}) + \delta_r - 1
= |pr| + \delta_r
\end{eqnarray*}
Analogously for $i=q$.
\qed
\end{proof}

A naive way to prove that $R_p$ and $R_q$ certify the potentiality of $r$ 
is to look at all possible 3-opt moves.

\begin{lemma}\label{lemma:quadraticcheck}
Let $(V,E)$ be a TSP instance and $pq\in E$. For $r\in V\setminus \{p,q\}$ let 
$R\subset R_p\cup R_q$ with $R = \{x\in V~|~rx\in E\land pq\sim rx\}$.
If for $i\in \{p,q\}$:
\begin{equation}
\label{eqn:quadraticcertification}
l(pq) + l(rx) + l(ry) ~>~ l(pr) + l(rq) + l(xy) ~~~~\text{for all}~x,y\in R_i,
\end{equation}
then $R_p$ and $R_q$ certify the potentiality of $r$.
\end{lemma}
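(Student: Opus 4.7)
The plan is to argue by contradiction via the $3$-opt move that swaps $\{pq,rx,ry\}$ for $\{pr,rq,xy\}$, mirroring the structure of the Main Edge Elimination Theorem but with only a single potential point.

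I would start by assuming there is an optimum tour $T$ containing $pq$, and let $x,y$ be the two neighbors of $r$ in $T$. From Lemma~\ref{lemma:compatibleedges} the edges $pq$, $rx$, $ry$ are pairwise compatible, so both $x$ and $y$ lie in $R$, and by hypothesis in $R_p \cup R_q$. Suppose for contradiction that both neighbors lie in the same set $R_i$ for some $i \in \{p,q\}$; this is exactly the situation we must rule out in order to show that the covering $(R_p,R_q)$ certifies the potentiality of $r$.

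Next I would carry out the 3-opt move. Walking along $T$ we meet the fragment $p{-}q$ somewhere, and elsewhere the fragment $x{-}r{-}y$ (in some orientation). Deleting $pq$, $rx$, $ry$ splits $T$ into the isolated vertex $r$ together with two paths whose endpoints are $\{p,y\}$ and $\{q,x\}$ respectively. Inserting the three edges $pr$, $rq$ and $xy$ gives $r$ degree $2$ with new neighbors $p$ and $q$, and the edge $xy$ merges the two remaining paths into a single path from $p$ to $q$; reading the resulting graph starting at $p$ shows it traverses every vertex and closes up into one cycle, i.e. it is a valid tour $T'$. The cost change is
\[
l(T) - l(T') \;=\; l(pq) + l(rx) + l(ry) - l(pr) - l(rq) - l(xy),
\]
which is strictly positive by hypothesis~(\ref{eqn:quadraticcertification}) applied to the index $i$ for which $x,y \in R_i$. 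This contradicts optimality of $T$.

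The only subtlety I would be careful about is the boundary behaviour when $\{x,y\}$ meets $\{p,q\}$. Since a tour on $n\ge 4$ vertices cannot realise $\{x,y\}=\{p,q\}$, at most one of $x,y$ can lie in $\{p,q\}$; a short case check (say $x=p$, $y\notin\{p,q\}$) shows that after removing $pq,rp,ry$ we obtain the isolated vertices $p$ and $r$ plus one path from $q$ to $y$, and adding $pr$, $rq$, $py$ re-closes this into a single Hamiltonian cycle, so the same cost comparison still applies. This is the one spot where I expect a reader might stumble, but it is a routine adjacency check rather than a conceptual obstacle; the real content is the same 3-opt swap together with the appeal to compatibility to place $x,y$ in $R\subset R_p\cup R_q$.
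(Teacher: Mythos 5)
Your proposal is correct and follows essentially the same argument as the paper: assume an optimum tour $T$ contains $pq$ with both neighbors $x,y$ of $r$ in the same set $R_i$, then the 3-opt move exchanging $\{pq,rx,ry\}$ for $\{pr,rq,xy\}$ is valid and, by inequality~(\ref{eqn:quadraticcertification}), produces a shorter tour, contradicting optimality. The paper's proof simply asserts the validity of this 3-opt move, whereas you additionally verify it by tracking the path decomposition and the degenerate case $x\in\{p,q\}$; this extra care is sound but not a different approach.
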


\begin{proof}
Assume that $pq$ is contained in an optimum tour $T$ and $rx, ry\in T$ with
$x, y\in R_i$ for $i\in \{p,q\}$. Then replacing the edges $pq$, 
$rx$ and $ry$ by the edges $pr$, $rq$ and $xy$ is a valid 3-opt move.
By inequality~(\ref{eqn:quadraticcertification}) this 3-opt move yields a 
shorter tour, contradicting the optimality of $T$.
\qed
\end{proof}

Lemma~\ref{lemma:quadraticcheck} yields a method to check in $O(n^2)$ time that a vertex $r$ is potential.
We now want to show that the potentiality of a vertex $r$ can be certified in 
constant time. We assume that the edge $pq$ is part of an optimum tour $T$. 
Furthermore we consider the covering $R_p$ and $R_q$ as described above.
The angles of the cones $R_p$ and $R_q$ are denoted by $\alpha_p$ respectively 
$\alpha_q$. The certification follows from a simple argument.

\begin{lemma}
\label{lemma:gammagreateralpha}
Assume that $pq$ is contained in an optimum TSP tour $T$, $r\in V\setminus 
\{p,q\}$ and the angle $\gamma$ between the two edges incident with $r$ in $T$ 
satisfies
\begin{equation}
\label{eqn:gammagreateralpha}
\gamma > \text{max}\{\alpha_p,\alpha_q\}~.
\end{equation}
Then the neighbors of $r$ in $T$ cannot both lie in $R_p$ respectively $R_q$.
\end{lemma}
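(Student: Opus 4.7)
The plan is a direct geometric argument by contradiction, exploiting the fact that $R_p$ and $R_q$ are cones with apex at $r$.

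First, I would make explicit the cone structure of $R_p$ and $R_q$. By definition, membership of a point $t$ in $R_p$ depends only on the location of $t_r$, which lies on the ray from $r$ through $t$. Hence both $R_p$ and $R_q$ are unions of rays emanating from $r$, i.e., cones with apex $r$. Lemma~\ref{thm:circleintersection} guarantees that the circle of radius $\delta_r$ around $r$ actually meets the circles of radius $l_p$ around $p$ and radius $l_q$ around $q$, so the bounding rays of each cone are well-defined and the opening angles $\alpha_p, \alpha_q \in (0,\pi)$ are meaningful. Consequently, any two points $x,y$ that both lie in $R_p$ satisfy $\angle xry \le \alpha_p$, and similarly for $R_q$.

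Next, I would argue by contradiction. Suppose $T$ is an optimum tour containing $pq$ and that the two neighbors $x,y$ of $r$ in $T$ both lie in $R_p$. Then the angle $\gamma$ between the two edges of $T$ incident with $r$ equals $\angle xry$, which by the cone observation is at most $\alpha_p$. This gives
\[
\gamma \;\le\; \alpha_p \;\le\; \max\{\alpha_p,\alpha_q\},
\]
contradicting~(\ref{eqn:gammagreateralpha}). The case in which both neighbors lie in $R_q$ is symmetric, with $\alpha_p$ replaced by $\alpha_q$.

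I do not anticipate a genuine obstacle: once one recognizes that $R_p$ and $R_q$ are cones with apex $r$, the conclusion is immediate. The only minor point that needs care is justifying that the opening angles are well-defined (handled by Lemma~\ref{thm:circleintersection}) and that the inequality $\angle xry \le \alpha_p$ holds for any two points inside a planar cone of opening $\alpha_p$, which is elementary.
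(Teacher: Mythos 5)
Your proof is correct and takes essentially the same route as the paper: the paper's own proof is the identical contradiction argument, stated in one line (``W.l.o.g.\ assume both neighbors of $r$ in $T$ lie in $R_p$; this immediately implies $\gamma\leq\alpha_p$''). Your elaboration of why $R_p$ and $R_q$ are cones with apex $r$ is precisely the justification the paper leaves implicit.
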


\begin{proof}
W.l.o.g. assume both neighbors of $r$ in $T$ lie in $R_p$. This immediatly
implies $\gamma\leq \alpha_p$, contradicting~(\ref{eqn:gammagreateralpha}).
\qed
\end{proof}

It now suffices to show that inequality~(\ref{eqn:gammagreateralpha}) holds for 
every optimum tour containing $pq$.
This can be checked using the following statement.

\begin{lemma}
\label{lemma:gamma}
Let $(V,E)$ be a TSP instance and $T$ an optimum tour. Let $pq\in T$ and 
$r\in V\setminus \{p,q\}$. Assume that inequality~(\ref{eqn:basic}) holds. 
Define the angle $\gamma_r$ as
\begin{equation}
\gamma_r := \arccos \left( 1 - \frac{\left(l_p + l_q - l(pq) + \frac{1}{2} 
\right)^2}{2\delta_r^2} \right)~.
\end{equation}
Then the angle $\gamma$ between the two edges of $T$ incident with 
vertex $r$ satisfies
\begin{equation}
\label{eqn:gammaestimation}
\gamma \geq \gamma_r~.
\end{equation}
\end{lemma}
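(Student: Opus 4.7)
The plan is to translate the claim about the angle $\gamma$ into a chord length bound on the circle of radius $\delta_r$ around $r$, and then to derive this bound from the compatibility of the tour's edges together with the optimality of $T$. Let $s$ and $t$ denote the two neighbors of $r$ in $T$, and let $s_r, t_r$ be the points on segments $rs, rt$ at Euclidean distance $\delta_r$ from $r$ as defined before Lemma~\ref{thm:conelemma}. Since $s_r$ and $t_r$ both lie on this circle and subtend angle $\gamma$ at $r$, the law of cosines applied to the isoceles triangle $r s_r t_r$ gives $|s_r t_r|^2 = 2\delta_r^2(1-\cos\gamma)$. Hence the claim $\gamma \geq \gamma_r$ is equivalent to the chord bound
\[
|s_r t_r| \;\geq\; l_p + l_q - l(pq) + \tfrac{1}{2}.
\]

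To control the positions of $s_r$ and $t_r$, I would invoke compatibility. Since $pq, rs, rt \in T$ and $T$ is optimum, Lemma~\ref{lemma:compatibleedges} gives $pq \sim rs$ and $pq \sim rt$, and the contrapositive of Lemma~\ref{thm:conelemma} then forces each of $s_r$ and $t_r$ to lie in $R_p \cup R_q$. After relabeling (swapping $s \leftrightarrow t$ or $p \leftrightarrow q$ as needed), two cases remain: either $s_r \in R_q$ and $t_r \in R_p$ (opposite cones), or $s_r$ and $t_r$ lie in the same cone.

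In the opposite-cone case the iterated triangle inequality $|pq| \leq |p s_r| + |s_r t_r| + |t_r q|$, combined with $|p s_r| \geq l_p$, $|t_r q| \geq l_q$, and $|pq| \geq l(pq) - \tfrac{1}{2}$ from inequality~(\ref{eq:euc2d}), immediately yields the desired chord bound. The main obstacle is the same-cone case, where the triangle chain cannot separate $s_r$ from $t_r$ because both points lie on the same side of the strip around the line $pq$. For this case I would use the optimality of $T$ more strongly, through a 3-opt move in the spirit of Lemma~\ref{lemma:quadraticcheck} that replaces $pq, rs, rt$ by $pr, qr, st$ whenever this swap yields a valid tour; the length inequality produced by optimality then lower-bounds $l(st)$ and hence $|st|$, and a geometric comparison exploiting that $s_r, t_r$ lie on the circle around $r$ while $s, t$ lie on the rays from $r$ through them should transfer this into the desired chord bound on $|s_r t_r|$. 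Handling the subcase where this particular 3-opt happens to be invalid, so that a different $k$-opt move must be chosen instead, is where I expect the argument to be most delicate.
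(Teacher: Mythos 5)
Your opening reduction is exactly right and matches the paper: by the law of cosines, $|s_rt_r|^2 = 2\delta_r^2(1-\cos\gamma)$, so (using~(\ref{eqn:basic}) to justify squaring) the claim is equivalent to the chord bound $|s_rt_r| \ge l_p + l_q - l(pq) + \tfrac12$. But the case analysis you build on top of this has a genuine gap in the ``opposite-cone'' case. The triangle inequality $|pq| \le |ps_r| + |s_rt_r| + |t_rq|$ rearranges to $|s_rt_r| \ge |pq| - |ps_r| - |t_rq|$, and to extract a useful lower bound you would need \emph{upper} bounds on $|ps_r|$ and $|t_rq|$; the cone conditions supply \emph{lower} bounds $|ps_r|\ge l_p$ and $|t_rq|\ge l_q$, which only make the right-hand side smaller, so nothing follows. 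Even if the substitution were legitimate, it would yield $|s_rt_r| \ge |pq| - l_p - l_q$, which by~(\ref{eq:euc2d}) and~(\ref{eqn:basic}) is at most $1$ and is essentially the \emph{negative} of the required quantity $l_p+l_q-l(pq)+\tfrac12$. The failure is not repairable within that case: the paper explicitly allows $R_p\cap R_q\neq\emptyset$, and two edges $rs,rt$ both compatible with $pq$ may point in nearly the same direction (both direction points in the overlap), making the chord arbitrarily small while the target bound is positive. So membership in the cones alone can never imply the chord bound; optimality of $T$ must be invoked in \emph{every} case, not only the same-cone case.

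The missing observation---which is the entire content of the paper's proof and makes your case distinction, and indeed any appeal to compatibility or to Lemma~\ref{thm:conelemma}, unnecessary---is that the 3-opt move replacing $pq$, $rs$, $rt$ by $pr$, $rq$, $st$ (the move of Lemma~\ref{lemma:quadraticcheck}) is valid for \emph{every} tour containing $pq$, $rs$, $rt$: deleting these three edges leaves a path from $q$ to $s$, a path from $t$ to $p$, and the isolated vertex $r$, and the new edges reconnect these into a single tour. In the degenerate cases (e.g.\ $s=p$) the move collapses to a valid 2-opt move and the same inequality survives after cancelling the common term, so the subcase you flag as ``most delicate'' (an invalid 3-opt) never occurs. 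Optimality of $T$ therefore gives $l(pq)+l(rs)+l(rt) \le l(pr)+l(qr)+l(st)$ unconditionally. Substituting $l(pr) = \delta_r + l(pq) - l_q - 1$ and $l(qr) = \delta_r + l(pq) - l_p - 1$ from~(\ref{def:lp}), the rounding bounds~(\ref{eq:euc2d}), and $|st| \le |ss_r| + |s_rt_r| + |t_rt|$, the terms $|ss_r|$ and $|t_rt|$ cancel and one obtains $|s_rt_r| \ge l_p + l_q - l(pq) + \tfrac12$ in one stroke---this is precisely the ``geometric comparison'' you left vague at the end of your same-cone sketch, and it needs no cones at all.
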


\begin{proof}
Let $rx, ry\in T$ be the two incident edges of $r$. Let $\mu :=|x_ry_r|$ .
The cosine formula yields the following equation:
\begin{equation}
\label{eq:cos-thm}
\mu^2 = 2\delta_r^2 - 2\delta_r^2\cos\gamma
\end{equation}
As $T$ is an optimum tour, there is no valid 3-opt move which yields a shorter
tour. Hence we get:
\begin{align*}
& & l(pq) + l(rx) + l(ry) &\leq l(pr) + l(qr) + l(xy)\\
&\Rightarrow & l_p + l_q + |x_rx| + |y_ry| -l(pq) + 1 &\leq l(xy)\\
&\Rightarrow & l_p + l_q -l(pq) + \frac{1}{2} &\leq \mu\\
&\overset{(\ref{eqn:basic})}{\Rightarrow} & \left(l_p + l_q -l(pq) + \frac{1}{2} 
\right)^2 &\leq \mu^2 ~=~ 
2\delta_r^2 - 
2\delta_r^2\cos\gamma\\
&\Rightarrow & \cos\gamma &\leq 1 - \frac{\left(l_p + l_q -l(pq) + \frac{1}{2} 
\right)^2}{2\delta_r^2}\\
&\Rightarrow & \gamma &\geq \gamma_r~. \\[-8mm]
\end{align*}
\qed
\end{proof}

From Lemma~\ref{lemma:gammagreateralpha} and Lemma~\ref{lemma:gamma} we  
immediately get the following result.

\begin{lemma}
Let $pq$ be an edge contained in some optimum TSP tour $T$ and $r\in V\setminus 
\{p,q\}$. Assume that inequality~(\ref{eqn:basic}) holds. If
\begin{equation}
\label{eqn:allgammagreateralpha}
\gamma_r > \text{max}\{\alpha_p, \alpha_q\}~,
\end{equation}
then the sets $R_p$ and $R_q$ certify the potentiality of $r$.
\end{lemma}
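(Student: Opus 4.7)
The plan is to observe that this lemma is a direct corollary of the two preceding lemmas, so the proof essentially amounts to chaining their conclusions. First I would recall that by Lemma~\ref{lemma:gamma}, whenever $pq$ lies in an optimum tour $T$ and the basic inequality (\ref{eqn:basic}) holds, the angle $\gamma$ between the two edges of $T$ incident to $r$ is at least $\gamma_r$. This step uses no new idea; it is exactly the conclusion stated in that lemma, which is why the hypothesis (\ref{eqn:basic}) is carried over verbatim.

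Second, I would combine this angle bound with the new hypothesis (\ref{eqn:allgammagreateralpha}). Since $\gamma \ge \gamma_r > \max\{\alpha_p,\alpha_q\}$, we obtain $\gamma > \max\{\alpha_p,\alpha_q\}$, which is exactly inequality~(\ref{eqn:gammagreateralpha}) needed to apply Lemma~\ref{lemma:gammagreateralpha}. Applying that lemma yields that the two neighbors of $r$ in $T$ cannot both lie in $R_p$ nor both in $R_q$.

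Finally, I would unwind the definition from Section~\ref{sec:potentialpoints}: a pair of sets $R_1,R_2$ with $R \subset R_1 \cup R_2$ certifies the potentiality of $r$ with respect to $pq$ exactly when, in every optimum tour containing $pq$, the two neighbors of $r$ do not both lie in $R_1$ and do not both lie in $R_2$. The cones $R_p$ and $R_q$ were constructed in Lemma~\ref{thm:conelemma} and the surrounding discussion precisely so that $R \subset R_p \cup R_q$, and the previous step has just verified the remaining condition. Therefore $R_p$ and $R_q$ certify the potentiality of $r$.

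There is no real obstacle here; the work was already done in proving Lemma~\ref{lemma:gamma} (the geometric/cosine calculation that turned a ``no improving 3-opt move'' condition into a lower bound on $\gamma$) and in setting up the cones so that $R\subset R_p\cup R_q$. The only thing to be slightly careful about is that the statement quantifies over \emph{every} optimum tour $T$ containing $pq$, so one should phrase the argument as ``fix an arbitrary such $T$ and apply the two lemmas'' rather than working with a single distinguished tour.
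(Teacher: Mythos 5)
Your proof is correct and matches the paper's intent exactly: the paper gives no separate proof, stating only that the lemma follows ``immediately'' from Lemma~\ref{lemma:gammagreateralpha} and Lemma~\ref{lemma:gamma}, which is precisely the chaining $\gamma \ge \gamma_r > \max\{\alpha_p,\alpha_q\}$ that you carry out. Your closing remark about quantifying over every optimum tour containing $pq$ is a sound clarification of the definition of certification, not a deviation from the paper.
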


It is easy to see that the angles $\alpha_p$ and $\alpha_q$ can be computed in 
constant time. Details are given in the appendix.\label{sec:alphas}

The results so far provide a way to prove in constant time that a given vertex 
is potential.
Notice that not all potential points can be detected using this approach.
However, as we will see in Section~\ref{sec:results}, sufficiently many 
potential points can be detected using this method.
For simplifying notation we introduce the following concept:
Let $pq$ be an edge and $r\in V\setminus \{p,q\}$. The vertex $r$ is called
\emph{strongly potential} (with respect to $pq$), if the
conditions~(\ref{eqn:basic}) and (\ref{eqn:allgammagreateralpha}) hold.

Thus, checking whether a point $r$ is strongly potential can be done in constant 
time (assuming that the value $\delta_r$ is known, which can be computed in a 
preprocessing step for all vertices). Verifying the inequalities in the Main 
Edge Elimination Theorem still needs $O(n)$ time. 
\label{sec:validating}
The aim now is to show that this can be 
done in constant time by computing appropriate lower bounds for (\ref{eqn:main1}) and (\ref{eqn:main2}).

\begin{lemma}
\label{lemma:min-estimation}
Let $(V,E)$ be a TSP instance and $r$ strongly potential with respect to $pq$. 
Let $R_p$ and $R_q$ be the covering certifying $r$. Then
\begin{eqnarray}
\label{eqn:MinMax-p}
\underset{x \in R_p} {\text{min}}\{ l(rx) - l(px) \} &\geq& \delta_r - 1
- \text{max}\{|px_r|: x\in R_p\} \text{ and}\\
\label{eqn:MinMax-q}
\underset{y \in R_q} {\text{min}}\{ l(ry) - l(qy) \} &\geq& \delta_r - 1
- \text{max}\{|qy_r|: y\in R_q\}~.
\end{eqnarray}
\end{lemma}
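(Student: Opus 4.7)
The plan is to chase through the Euclidean identities for the intermediate point $x_r$ and convert back to $l$-lengths via the inequality $l(e)-\tfrac{1}{2}\le|e|\le l(e)+\tfrac{1}{2}$ from~(\ref{eq:euc2d}). The key geometric fact is that by definition $x_r$ lies on the segment $rx$ with $|rx_r|=\delta_r$, so $r$, $x_r$, $x$ are collinear in this order, giving the clean identity
\begin{equation*}
|rx| \;=\; \delta_r + |x_r x|.
\end{equation*}
I would start by recording this identity, which replaces the Euclidean length $|rx|$ by something involving $\delta_r$ and the distance from $x_r$ to $x$.

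Next I would apply the triangle inequality in the plane to the triple $p,x_r,x$, obtaining $|px|\le |px_r|+|x_rx|$, i.e.\ $|x_rx|\ge |px|-|px_r|$. Plugging this into the identity above yields
\begin{equation*}
|rx|\;\ge\;\delta_r + |px| - |px_r|.
\end{equation*}
Now I switch from Euclidean to EUC\_2D lengths using~(\ref{eq:euc2d}): $l(rx)\ge |rx|-\tfrac{1}{2}$ and $l(px)\le |px|+\tfrac{1}{2}$. Subtracting gives
\begin{equation*}
l(rx)-l(px)\;\ge\;\delta_r - 1 - |px_r|.
\end{equation*}
Taking the minimum over $x\in R_p$ on the left and bounding $|px_r|$ by $\max\{|px_r|:x\in R_p\}$ on the right yields precisely~(\ref{eqn:MinMax-p}).

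Inequality~(\ref{eqn:MinMax-q}) is proved in exactly the same way by replacing $p$ with $q$ and $R_p$ with $R_q$ throughout. I do not anticipate any real obstacle: the argument is a short triangle-inequality chase, and the only minor point requiring care is correctly tracking the two $\tfrac12$-rounding errors from~(\ref{eq:euc2d}) which combine into the $-1$ on the right-hand side. The collinearity $r,x_r,x$ is what turns the triangle inequality from a loose estimate into the tight bound needed here.
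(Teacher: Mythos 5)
Your proof is correct and is essentially identical to the paper's: both rest on the collinearity identity $|rx|=\delta_r+|x_rx|$, the triangle inequality $|px|\le|px_r|+|x_rx|$, and the EUC\_2D rounding bounds combining into the $-1$. The only difference is cosmetic (you convert from $|\cdot|$ to $l(\cdot)$ at the end rather than at the start of the chain).
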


\begin{proof}
Let $x\in R_p$. Then
\begin{eqnarray*}
l(rx)-l(px) &\geq& |rx|-|px|-1  
       ~\geq~ \delta_r + |x_rx|- (|px_r|+|x_rx|) - 1\\
       &\geq& \delta_r - 1 - \text{max}\{|px_r|: x\in R_p\}
\end{eqnarray*}
Similarly one can prove this for the set $R_q$.
\qed
\end{proof}

Let $C_r$ be the circle around $r$ with radius $\delta_r$.
Define the two arcs
\begin{eqnarray*}
B_p := \{x\in C_r~|~|qx| \geq l_q\}~~~\text{and}~~~ 
B_q := \{y\in C_r~|~|py| \geq l_p\}~.
\end{eqnarray*}
Further let $\tilde{p}$ and $\tilde{q}$ be the points on $C_r$ with greatest 
distance to $p$ respectively $q$.
Since $B_p$ and $B_q$ are connected, the maxima in the inequalities~(\ref{eqn:MinMax-p}) 
and~(\ref{eqn:MinMax-q}) can only be attained at 
$\tilde{p}$ respectively $\tilde{q}$, or at the endpoints of $B_p$ respectively
$B_q$.
We consider the case that
\begin{eqnarray}
|p\tilde{q}| \leq l_p \quad ~~\text{and}~~~ 
|q\tilde{p}| \leq l_q~. \label{eqn:pq-tilde}
\end{eqnarray}
This gives that
\begin{eqnarray}
\text{max}\{|px_r|: x\in R_p\} &\leq& \text{max}\{t~|~t\in B_p\} \quad 
\text{and}\label{eqn:MaxBp}\\
\text{max}\{|qy_r|: y\in R_q\} &\leq& \text{max}\{t~|~t\in B_q\}~,
\label{eqn:MaxBq}
\end{eqnarray}
where the right hand sides can easily be calculated in constant time. Details are given in the appendix.

\section{The Algorithm}
\label{sec:algorithm}

In this section we describe our algorithm that eliminates useless edges. 
It consists of three independent steps. Step~1 uses the results of 
Section~\ref{sec:main} and Section~\ref{sec:certification} and eliminates the majority of all 
edges. Step~2 applies the Main Edge Elimination in combination with the 
Close Point Elimination to eliminate additional edges. 
Finally in Step~3 we use a backtrack search of bounded depth to eliminate even more edges.

\subsubsection*{Step 1: Fast Elimination}
To prove that an edge $pq$ is useless we need 
to find two potential points $r$ and $s$ satisfying the conditions of the 
Main Edge Elimination Theorem. For a point $r$ we use the method described in 
Section~\ref{sec:certification}, to prove that it is potential. In fact we will 
only use $r$ if it is strongly potential. This can be checked in constant time.
The next step is to calculate the minima appearing in the Main Edge Elimination Theorem
using Lemma~\ref{lemma:min-estimation}.
This can be done separately for each potential point in constant time.

Once two potential points $r$ and $s$ with their corresponding minima are 
calculated, one can check in constant time whether the inequalities of the Main 
Edge Elimination Theorem are satisfied. 
Since only two potential points which satisfy the conditions of the theorem are 
needed, a smart ordering and stopping criteron for checking the
potentiality of points can speed up the algorithm drastically.
We select the points ordered by their distance from the midpoint of the edge $pq$
and stop after at most $10$ points that have been considered.

Using a 2-d tree~\cite{Ben1975} we compute in a preprocessing step 
the values $\delta_r$ for all vertices of the instance. 
In most cases it turns out that at most three strongly potential points have to be considered
to prove that an edge is useless.

\subsubsection*{Step 2: Direct Elimination}
For an edge $pq$ we consider two vertices $r$ and $s$ with all their 
incident edge pairs. The Close Point Elimination is used to 
reduce the number of possible edge pairs.
The edge $pq$ can be eliminated either if no edge pair can be found for $r$ or 
$s$, or if all combinations of edge pairs satisfy the condition of the 
Main Edge Elimination Theorem.

\subsubsection{Step 3: Backtrack Search}
\label{sec:backtracking}
In this step starting with an edge $pq$ we extend a set of disjoint paths recursively.
We allow two operations for the extension. Either we select one of the existing paths and add
an edge incident to one of its endpoints. Or we add a vertex not yet contained in any of
the disjoint paths and two edges incident with this vertex. 
After each extension we check whether the Main Edge Elimination Theorem or the Close Point Elimination 
allows to eliminate one of the path edges. In this case we backtrack. 
Moreover we check that the collection of paths is minimal in the sense
that no collection of paths exists that has shorter length and that connects the same pairs of 
endpoints and uses the same set of interior points. We use an extension of the 
Held-Karp algorithm~\cite{HK1962} for this.  
We always select the extension of the set of paths that has the smallest number of possibilities. 
If all extensions have been examined
without reaching a predefined extension depth, then we have proven that edge $pq$ is useless.

\section{Results}
\label{sec:results}

We applied our algorithm to all TSPLIB~\cite{Rei1995} instances which 
use the EUC\_2D metric as well as to some larger EUC\_2D instances from~\cite{BCWebsite}.
Table~\ref{tab:results} contains the results on some of these instances.

\begin{table}[ht]
\caption{Results for some TSPLIB instances with at least 1,000 vertices as well as a 100,000 vertex instance from~\cite{BCWebsite}.
The first three columns contain the instance name, the number of vertices and the number of edges. 
Then for each of the three steps as described in Section~\ref{sec:algorithm} we list the number of edges
that remain after this step as well as the runtime. For Step~3 we used an extension depth of~10 which gave a reasonable trade off between the runtime and 
the number of eliminated edges. 
The last two columns contain the 
total runtime of our algorithm and the ratio of the number of edges remaining after Step~3 divided by the number of vertices. 
All runtimes are given in the format \texttt{hh:mm:ss} and are measured
on a single core of a 2.9GHz Intel Xeon.}
\label{tab:results}
\begin{center}
\tiny
\scriptsize
\begin{tabular}{|l|r|r||r|r||r|r||r|r||r|r|}
\hline
 &  &  &  \multicolumn{2}{c||}{~Step 1}  & \multicolumn{2}{c||}{~Step 2} & \multicolumn{2}{c||}{~Step 3} &  total &  \\
\raisebox{1.5ex}[-1.5ex]{\textbf{instance}} & \raisebox{1.5ex}[-1.5ex]{$n$} & \raisebox{1.5ex}[-1.5ex]{$m$} &     edges &   time &   edges &  time &  edges &  time &
 runtime &\raisebox{1.5ex}[-1.5ex]{ratio}\\
\hline
pr1002            &       1002 &     501501 &      42636 &         1 &       5810 &       2:13 &    4521  &   2:28:07  &    2:30:21 &  4.2   \\ \hline
u1060             &       1060 &     561270 &      43887 &         1 &       6063 &       2:24 &    4619  &   3:30:48  &    3:33:13 &  4.4   \\ \hline
vm1084            &       1084 &     586986 &      40958 &         1 &       6035 &       3:28 &    4610  &   1:17:35  &    1:21:05 &  4.3   \\ \hline
pcb1173           &       1173 &     687378 &      32533 &         1 &       7662 &         33 &    6084  &   3:10:17  &    3:10:51 &  5.2   \\ \hline
d1291             &       1291 &     832695 &     122897 &         4 &      12552 &      52:21 &   11317  &  13:33:14  &   14:25:40 &  8.8   \\ \hline
rl1304            &       1304 &     849556 &     124561 &         3 &      21689 &      11:54 &   14527  &  12:53:14  &   13:05:11 & 11.1   \\ \hline
rl1323            &       1323 &     874503 &     106860 &         2 &      16743 &       5:33 &   12691  &   9:41:18  &    9:46:53 &  9.6   \\ \hline
nrw1379           &       1379 &     950131 &      28468 &         1 &       7199 &       1:58 &    5752  &   2:12:44  &    2:14:44 &  4.2   \\ \hline
u1432             &       1432 &    1024596 &      21970 &         1 &       7817 &       2:51 &    6495  &   2:13:02  &    2:15:55 &  4.5   \\ \hline
d1655             &       1655 &    1368685 &     230855 &         9 &      14345 &      37:30 &   12103  &  10:05:46  &   10:43:26 &  7.3   \\ \hline
vm1748            &       1748 &    1526878 &     144681 &         6 &      12303 &      23:10 &    7691  &   2:48:14  &    3:11:30 &  4.4   \\ \hline
u1817             &       1817 &    1649836 &     109056 &         5 &      13201 &      10:58 &   11736  &   6:19:22  &    6:30:25 &  6.5   \\ \hline
rl1889            &       1889 &    1783216 &     206768 &         9 &      23410 &    3:15:40 &   18673  &  25:18:52  &   28:34:41 &  9.9   \\ \hline
d2103             &       2103 &    2210253 &     166866 &         8 &      19631 &      55:01 &   18105  &  18:19:34  &   19:14:44 &  8.6   \\ \hline
u2152             &       2152 &    2314476 &     117030 &         5 &      15101 &      11:17 &   13170  &   7:07:45  &    7:19:08 &  6.1   \\ \hline
u2319             &       2319 &    2687721 &      21698 &         3 &       9919 &         44 &    9473  &   1:41:41  &    1:42:22 &  4.1   \\ \hline
pr2392            &       2392 &    2859636 &     121514 &         7 &      15598 &      13:03 &   12088  &   7:41:45  &    7:44:55 &  5.1   \\ \hline
pcb3038           &       3038 &    4613203 &      95576 &         8 &      17940 &      11:05 &   14869  &   5:44:08  &    5:55:22 &  4.9   \\ \hline
fnl4461           &       4461 &    9948030 &     128527 &        15 &      23963 &       9:30 &   19082  &   7:14:21  &    7:24:07 &  4.3   \\ \hline
brd14051          &      14051 &   98708275 &    2661869 &      4:39 &      93497 &   18:50:15 &   64486  &  28:39:22  &   47:34:16 &  4.6   \\ \hline
d15112            &      15112 &  114178716 &    1703765 &      5:51 &     130110 &   10:25:44 &   66010  &  38:37:42  &   49:09:17 &  4.4   \\ \hline
d18512            &      18512 &  171337816 &    1449877 &      5:30 &     112681 &    1:49:35 &   84203  &  32:38:07  &   34:33:13 &  4.5   \\ \hline
mona-lisa100k     &     100000 & 4999950000 &    2071297 &   3:45:21 &     476001 &      22:51 &  322716  &  55:42:51  &   59:51:04 &  3.2   \\ \hline
\end{tabular}\vspace*{-12mm}
\end{center}
\end{table}

The TSP solver Concorde~\cite{ABCC2006} is the fastest available algorithm 
to solve large TSP instances optimally. It can be downloaded at~\cite{BCWebsite}. 
We applied Concorde to the TSPLIB instance d2103. 
The total runtime needed by Concorde was $17,219,190$~seconds\footnote{Runtime on a 2.9GHz Intel Xeon using Concorde~\cite{BCWebsite}. 
We observed that the runtime of Concorde can vary by more than 40\% on the same instance.
Therefore, we took the average runtime of two independent runs. The Concorde log-files for all runs can be found at 
\texttt{http://www.or.uni-bonn.de/\textasciitilde hougardy/EdgeElimination}.}.
This agrees with the runtime reported for this instance 
on page 503 of~\cite{ABCC2006}. Then we ran the three steps of our algorithm as described in Section~\ref{sec:algorithm}.
For Step~3 we used an extension depth of~12. After $168,153$~seconds all but $16,566$~edges were eliminated. 
We changed the length of all eliminated edges
to some large value and gave this new instance again to Concorde. On this instance Concorde needed $1,392,582$~seconds.
Thus the total runtime was improved by our edge elimination algorithm by more than a factor of 11.

Two other successful applications of our edge elimination approach were reported to us by Cook~\cite{BCpersonal}. 
First, the edge elimination approach in combination with the LKH algorithm~\cite{Hel2009} improved the so far best known TSP tour for
the DIMACS instance E100k.0~\cite{KHWebpage}. The shortest tour known so far had length $225, 786, 982$.
It was found using the LKH algorithm.
Cook's implementation of an edge elimination approach eliminated all but $274,741$ edges in the TSP instance E100k.0.
By applying the LKH algorithm to this edge set a tour of length $225,784,127$ was found~\cite{KHWebpage}.
Secondly, Cook applied the edge elimination approach to a truly Euclidean instance 
(i.e., a Euclidean instance where the point distances are not rounded). 
Finding optimum TSP tours in such instances is much harder than in instances with rounded Euclidean norm. 
The largest truly Euclidean instance that Cook was able to solve so far had 500 points. With the help of the edge elimination approach he
solved an instance with 1000 points.

%
%


\newpage
\section{Appendix}
\label{sec:appendix}

\subsection{The angles $\alpha_p$ and $\alpha_q$}

The following lemma shows how to compute the angles  $\alpha_p$ and $\alpha_q$,
discussed in Section~\ref{sec:alphas}.

\begin{lemma}
Let $(V,E)$ be TSP instance, $pq\in E$ and $r\in V\setminus \{p,q\}$.
Then the angles $\alpha_p$ and $\alpha_q$ of the cones $R_p$ respectively
$R_q$ satisfy
 \begin{eqnarray}
  \alpha_p &=& 
  2\cdot \arccos\left(\frac{l_q^2 - \delta_r^2 - |rq|^2}{2\delta_r |rq|}\right)
  \label{eqn:alpha_p}~~~~\text{respectively}\\
  \alpha_q &=& 
  2\cdot \arccos\left(\frac{l_p^2 - \delta_r^2 - |rp|^2}{2\delta_r |rp|}\right)~.
  \label{eqn:alpha_q}
 \end{eqnarray}
\end{lemma}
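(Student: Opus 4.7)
The plan is to use the law of cosines applied to an appropriate triangle at $r$. First I would recall the definition: $R_p$ is the cone from $r$ whose intersection with the circle $C_r$ (radius $\delta_r$, centered at $r$) is the arc $B_p = \{x\in C_r : |qx|\ge l_q\}$. The two boundary rays of this cone are therefore those through the two points $A_1,A_2$ where the circle $C_r$ meets the circle of radius $l_q$ centered at $q$; these two intersection points exist by Lemma~\ref{thm:circleintersection} (since inequality~(\ref{eqn:basic}) is assumed to hold throughout the setup of Section~\ref{sec:certification}).

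Next I would fix one of these intersection points, say $A = A_1$, and consider the triangle $rqA$, whose three side lengths are known exactly: $|rq|$ by definition, $|rA|=\delta_r$ since $A\in C_r$, and $|qA|=l_q$ since $A$ lies on the circle around $q$ with radius $l_q$. Applying the law of cosines at the vertex $r$ gives
\begin{equation*}
\cos\bigl(\angle qrA\bigr) \;=\; \frac{\delta_r^2 + |rq|^2 - l_q^2}{2\,\delta_r\,|rq|}.
\end{equation*}
By the symmetry of the construction (reflection across the line $rq$), the two boundary rays of the cone make equal angles with the ray $rq$, each equal to $\angle qrA$.

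The final step is to relate this to $\alpha_p$. Since $R_p$ contains the points of $C_r$ that are \emph{far} from $q$, the cone opens in the direction opposite to $q$: its axis is the ray from $r$ pointing away from $q$, and its opening angle $\alpha_p$ is the angle of the arc complementary (on $C_r$) to the arc from $A_1$ to $A_2$ through the side of $q$. Thus $\alpha_p = 2\pi - 2\angle qrA = 2(\pi - \angle qrA)$. Using the identity $\pi - \arccos(x) = \arccos(-x)$, this yields
\begin{equation*}
\alpha_p \;=\; 2\arccos\!\left(\frac{l_q^2 - \delta_r^2 - |rq|^2}{2\,\delta_r\,|rq|}\right),
\end{equation*}
which is~(\ref{eqn:alpha_p}). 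Formula~(\ref{eqn:alpha_q}) follows verbatim by interchanging the roles of $p$ and $q$ (and of $l_p$ and $l_q$) in the argument.

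There is no real obstacle here; the only point deserving care is orienting the cone correctly, i.e., checking that $R_p$ is the cone on the side of $r$ \emph{away} from $q$ (so that its half-angle is $\pi - \angle qrA$ rather than $\angle qrA$ itself). This is immediate from the defining inequality $|qt_r|\ge l_q$ combined with the fact, used in Lemma~\ref{thm:circleintersection}, that the two circles genuinely intersect.
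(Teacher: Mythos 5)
Your proposal is correct and follows essentially the same route as the paper: both apply the law of cosines at $r$ in the triangle formed by $r$, $q$, and an intersection point of $C_r$ with the circle of radius $l_q$ around $q$, and then pass to the supplementary angle (the paper writes $\cos(\alpha_p/2)=\cos(180^\circ-\omega_q)$, which is exactly your $\alpha_p = 2(\pi - \angle qrA)$ combined with $\pi-\arccos(x)=\arccos(-x)$). Your write-up is in fact slightly more careful than the paper's, since you justify the orientation of the cone (why the relevant arc is the one facing away from $q$) and the existence of the intersection points via Lemma~\ref{thm:circleintersection}, both of which the paper leaves implicit.
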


\begin{proof}
Let $C_r$ be the circle around $r$ with radius $\delta_r$ and $C_q$ the circle
around $q$ with radius $l_q$. Let $t$ be one intersection point of these two
circles. Consider the triangle with the points $q$, $r$ and $t$. Let $\omega_q$ 
be the angle of the triangle at the point $r$. 
The cosine theorem then yields
\begin{equation}
 \delta_r^2 + |rq|^2 - 2\delta_r|rq|\cos(\omega_q) = l_q^2~. \label{eqn:cos-omega}
\end{equation}
Hence we can conclude
\begin{eqnarray}
 \cos\left(\frac{\alpha_p}{2}\right) 
 &=& \cos\left(180^\circ - \omega_q\right)\\
 &=& -\cos(\omega_q)\\
 &\overset{(\ref{eqn:cos-omega})}{=}&  
 \frac{l_q^2 - \delta_r^2 - |rq|^2 }{2\delta_r|rq|}~.
\end{eqnarray}
This equation implies~(\ref{eqn:alpha_p}).
Equation~(\ref{eqn:alpha_q}) can be shown equivalently.
\end{proof}

\subsection{Validating}

In order to calculate the maxima of (\ref{eqn:MinMax-p}) and (\ref{eqn:MinMax-q}), 
we need to define some angles. For an edge $pq\in E$ and a point 
$r\in V\setminus \{p,q\}$ we define $\varepsilon_p$ and $\varepsilon_q$ as the 
angles in the triangle $pqr$ at the points $p$ respectively $q$. Further let $t$
be such that $|rt|=\delta_r$ and $|pt|=l_p$. Let $\theta_p$ be the angle in the 
triangle $prt$ at the point $p$. The angle $\theta_q$ is defined 
equivalently. By the cosine theorem we obtain the following equations:

\begin{figure}[ht]
\begin{center}
\begin{tikzpicture}

\coordinate (p) at (0,0);
\coordinate (q) at (6,0);
\coordinate (r) at (3,2);
\filldraw (p) node[anchor=east]{$p$} circle (2pt);
\filldraw (q) node[anchor=west]{$q$} circle (2pt);
\filldraw (r) node[anchor=north]{$r$} circle (2pt);
\draw (p) -- (q);

\draw (r) circle (1.2);

\draw (r) -- node[anchor=west]{$\delta$} +(100:1.2) -- (q);
\draw (r) -- +(190:1.2) -- (q);

\draw (q) -- +(147:0.6) arc (147:180:0.6);
\draw (q) -- +(147:1.3) arc (147:157:1.3);
\draw (5.2,0.15) node{$\varepsilon_q$};
\draw (4.6,0.8) node{$\theta_q$};

\begin{scope}
\clip (0,1.2) rectangle (6,3.5);
\draw (q) circle (4.5);
\end{scope}

\filldraw (p) -- (r);
\filldraw (q) -- (r);
\end{tikzpicture}
\caption{The angles $\varepsilon$ and $\theta$}
\label{fig:eps-theta}
\end{center}
\end{figure}
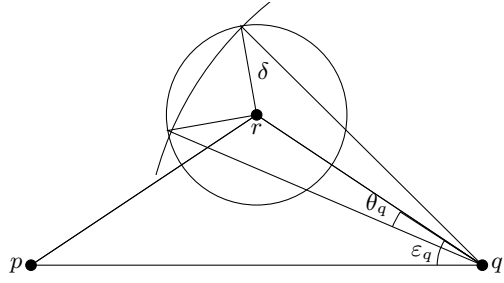

\begin{eqnarray}
\cos\varepsilon_p &=& \frac{|pq|^2 + |pr|^2 - |qr|^2}{2|pq||pr|}\\
\cos\varepsilon_q &=& \frac{|pq|^2 + |qr|^2 - |pr|^2}{2|pq||qr|}\\
\cos\theta_p &=& \frac{l_p^2 + |pr|^2 - \delta_r^2}{2l_p|pr|}\\
\cos\theta_q &=& \frac{l_q^2 + |qr|^2 - \delta_r^2}{2l_q|qr|}~.
\end{eqnarray}

By considering the cosine theorem in the triangles $pq\tilde{p}$ 
respectively $pq\tilde{q}$, the inequalities~(\ref{eqn:pq-tilde}) 
are equivalent to
\begin{eqnarray}
\frac{(|qr| + \delta_r)^2 + |pq|^2 - l_p^2}{2(|qr| + \delta_r)|pq|} 
&\leq& \cos\varepsilon_q\\
\frac{(|pr| + \delta_r)^2 + |pq|^2 - l_q^2}{2(|pr| + \delta_r)|pq|} 
&\leq& \cos\varepsilon_p~.
\end{eqnarray}

We want to calculate the maxima of Lemma~\ref{lemma:min-estimation}. They 
correspond to
\begin{eqnarray}
 \text{max}\{|ps_r|: s\in R_p\} &=& \text{max}\{b_p, b_p'\}\\
 \text{max}\{|qs_r|: s\in R_q\} &=& \text{max}\{b_q, b_p'\}~,
\end{eqnarray}
whereas the endpoints of the archs $B_p$ and $B_q$ are defined as $b_p$ and
$b_p'$ respectively $b_q$ and $b_q'$.

\begin{lemma}
Assuming $\varepsilon \leq 180^\circ$ and $\theta \leq 180^\circ$ then we have
the following equivalence:
\begin{equation}
\varepsilon + \theta \leq 180^\circ \Leftrightarrow \cos\varepsilon +
\cos\theta \geq 0
\end{equation}
\end{lemma}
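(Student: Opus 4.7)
The plan is to prove this elementary trigonometric equivalence by reducing both sides to a monotonic comparison of angles in the interval where cosine is well-behaved. Under the hypothesis $\varepsilon, \theta \in [0^\circ, 180^\circ]$, cosine is strictly decreasing, which is the only property really needed.

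First I would rewrite the right-hand inequality using $\cos\theta = -\cos(180^\circ - \theta)$, so that
\[
\cos\varepsilon + \cos\theta \ge 0 \quad \Longleftrightarrow \quad \cos\varepsilon \ge \cos(180^\circ - \theta).
\]
Since $\theta \in [0^\circ, 180^\circ]$, the angle $180^\circ - \theta$ also lies in $[0^\circ, 180^\circ]$, and $\varepsilon$ is in this interval by hypothesis. On $[0^\circ, 180^\circ]$ the cosine function is strictly decreasing, so $\cos\varepsilon \ge \cos(180^\circ - \theta)$ holds if and only if $\varepsilon \le 180^\circ - \theta$, which is exactly $\varepsilon + \theta \le 180^\circ$. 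This gives the equivalence directly.

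As an alternative route, one could apply the sum-to-product identity
\[
\cos\varepsilon + \cos\theta = 2\cos\!\left(\tfrac{\varepsilon+\theta}{2}\right)\cos\!\left(\tfrac{\varepsilon-\theta}{2}\right).
\]
Because $\varepsilon, \theta \in [0^\circ, 180^\circ]$, the factor $\cos\bigl((\varepsilon-\theta)/2\bigr)$ is nonnegative (the argument lies in $[-90^\circ, 90^\circ]$), so the sign of the sum is governed by $\cos\bigl((\varepsilon+\theta)/2\bigr)$, which is nonnegative precisely when $(\varepsilon+\theta)/2 \le 90^\circ$. The only subtle case is $|\varepsilon - \theta| = 180^\circ$, forcing $\{\varepsilon,\theta\} = \{0^\circ,180^\circ\}$, where both sides of the equivalence evaluate to equality; this is easily checked by hand.

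There is no real obstacle here — the statement is a routine consequence of the monotonicity of cosine on $[0^\circ,180^\circ]$, and the first approach avoids any product-to-sum manipulation. The only thing to be careful about is to verify that the monotonicity argument applies to both $\varepsilon$ and $180^\circ - \theta$, which is immediate from the hypothesis. I would prefer the first proof for brevity in the final write-up.
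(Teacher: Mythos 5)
Your proof is correct. Note, however, that the paper states this lemma with no proof at all — it is treated as an elementary fact and immediately followed by the next lemma — so there is no argument of the authors' to compare against; your first argument (rewriting $\cos\theta = -\cos(180^\circ - \theta)$ and invoking the strict monotonicity of cosine on $[0^\circ,180^\circ]$) is precisely the kind of routine justification they omitted, and it is the cleaner of your two routes. One point worth making explicit in a write-up: the equivalence genuinely requires $\varepsilon,\theta \ge 0^\circ$, which the paper's hypothesis does not state (for instance $\varepsilon = -90^\circ$, $\theta = 180^\circ$ satisfies the stated hypotheses, has $\varepsilon + \theta \le 180^\circ$, yet $\cos\varepsilon + \cos\theta = -1 < 0$); your proof silently adds this lower bound, which is legitimate in context because $\varepsilon$ and $\theta$ are angles of triangles, but the dependence should be flagged.
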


Using simple geometric calculations, we can deduce the following lemma:

\begin{lemma}
\label{lemma:estimation}
Let $r$ be strongly potential with respect to $pq$, and let the 
inequalities~(\ref{eqn:pq-tilde}) hold. 
Let $\varepsilon_p + \theta_p \leq 180^\circ$ and
$\varepsilon_q + \theta_q \leq 180^\circ$. Then
\begin{eqnarray}
\left(\text{max}\{|ps_r|: s\in R_p\}\right)^2
&=& |pq|^2 + l_q^2 - 2|pq|l_q\cos(\varepsilon_q + \theta_q)\\
\left(\text{max}\{|qs_r|: s\in R_q\}\right)^2
&=& |pq|^2 + l_p^2 - 2|pq|l_p\cos(\varepsilon_p + \theta_p)
\end{eqnarray}
\end{lemma}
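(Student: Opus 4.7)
The plan is to invoke the law of cosines in the triangle $p q b_p$, where $b_p$ is the ``far'' endpoint of the arc $B_p$, and then to justify that this endpoint indeed realizes the maximum in question. The second equation follows symmetrically by interchanging the roles of $p$ and $q$, so I focus on the first.

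First I would recall the geometric setup. The arc $B_p$ lies on the circle $C_r$ of radius $\delta_r$ around $r$, and its endpoints $b_p,b_p'$ are precisely the two intersection points of $C_r$ with the circle of radius $l_q$ around $q$. Both endpoints satisfy $|q b_p| = |q b_p'| = l_q$. Inside the triangles $q r b_p$ and $q r b_p'$, the angle at $q$ is in each case equal to $\theta_q$ (by the definition of $\theta_q$ together with $|q b_p|=l_q$, $|r b_p|=\delta_r$), while the two endpoints lie on opposite sides of the line through $q$ and $r$. Together with the angle $\varepsilon_q$ between $qp$ and $qr$, this means that the angle at $q$ in triangle $p q b_p$ is $\varepsilon_q+\theta_q$ for the endpoint on the side of $qr$ opposite to $p$, and $|\varepsilon_q-\theta_q|$ for the other endpoint. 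The hypothesis $\varepsilon_q+\theta_q\le 180^\circ$ guarantees that this sum really is the interior angle of the triangle $pqb_p$ (rather than its reflex complement).

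Next I would apply the law of cosines to the triangle $p q b_p$ to obtain
\begin{equation*}
|p b_p|^2 \;=\; |pq|^2 + l_q^2 - 2\,|pq|\,l_q\,\cos(\varepsilon_q+\theta_q),
\end{equation*}
and analogously $|p b_p'|^2 = |pq|^2 + l_q^2 - 2\,|pq|\,l_q\,\cos|\varepsilon_q-\theta_q|$. Since cosine is strictly decreasing on $[0,180^\circ]$ and $\varepsilon_q+\theta_q\ge |\varepsilon_q-\theta_q|$, we conclude $|p b_p|\ge |p b_p'|$, so among the two endpoints the maximum of $|p \cdot|$ is attained at $b_p$.

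Finally I would use the hypotheses to argue that the maximum over all of $R_p$ (equivalently, by (\ref{eqn:MaxBp}), over $B_p$) is actually attained at an endpoint of $B_p$ and not at the interior point $\tilde p$. Since $B_p\subset C_r$ is a connected arc, the function $x\mapsto |px|$ restricted to $B_p$ is either monotone along the arc or has its unique interior critical point at $\tilde p$ (the point on $C_r$ farthest from $p$). The assumption $|q\tilde p|\le l_q$ from (\ref{eqn:pq-tilde}) says exactly that $\tilde p\notin B_p$ (which requires $|qx|\ge l_q$, with the relevant open version excluded), so the restriction is monotone along $B_p$ and its maximum is at an endpoint, namely $b_p$. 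Combining with the law-of-cosines computation yields the claimed identity. The only delicate step is this last one: separating the arc-interior case from the endpoint case and verifying that assumption (\ref{eqn:pq-tilde}) really does rule out $\tilde p\in B_p$ — everything else is an immediate cosine-theorem computation.
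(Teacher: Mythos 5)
Your proof is correct and follows essentially the same route as the paper's: the paper's own (much terser) proof consists of exactly your key comparison $\cos(\varepsilon_q - \theta_q) \geq \cos(\varepsilon_q + \theta_q)$ followed by the cosine theorem, with the reduction to the arc endpoints $b_p, b_p'$ (via condition~(\ref{eqn:pq-tilde}) excluding $\tilde p$ from $B_p$) taken from the discussion preceding the lemma. You have merely filled in the details the paper leaves implicit, in particular the identification of the triangle angles $\varepsilon_q + \theta_q$ and $|\varepsilon_q - \theta_q|$ at the two endpoints.
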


\begin{proof}
From the assumption we can follow $\cos(\varepsilon_q - \theta_q) \geq
\cos(\varepsilon_q + \theta_q)$. The cosine theorem then implies
\begin{equation*}
\left(\text{max}\{|ps_r|: s\in R_p\}\right)^2
= |pq|^2 + l_q^2 - 2|pq|l_q\cos(\varepsilon_q + \theta_q)~.
\end{equation*}
\qed
\end{proof}

\end{document}